\def\v #1{\vert #1\vert}             %Para denotar elgrado de #1
\def\m #1 #2{(-1)^{{\v #1} {\v #2}}} %Para denotar el signo (-1)^...
\theoremstyle{plain}
\newtheorem{theorem}{Theorem}
\theoremstyle{definition}
\newtheorem{definition}[theorem]{Definition}
\def\<#1>{\langle#1\rangle}
\begin{document}

\centerline{\Large \bf A geometric Hamilton--Jacobi theory}\vskip 0.25cm
\centerline{\Large \bf  on a Nambu-Jacobi manifold}

\medskip
\medskip

\centerline{M. de Le\'on and C.
Sard\'on}
%\medskip
\vskip 0.5cm
\centerline{Instituto de Ciencias Matem\'aticas, Campus Cantoblanco}\vskip 0.2cm
\centerline{Consejo Superior de Investigaciones Cient\'ificas}
%\medskip
\vskip 0.2cm
\centerline{C/ Nicol\'as Cabrera, 13--15, 28049, Madrid. SPAIN}
%\medskip

\begin{abstract}
In this paper we propose a geometric Hamilton--Jacobi theory on a Nambu--Jacobi manifold.   
The advantange of a geometric Hamilton--Jacobi theory is that if a Hamiltonian vector field $X_H$
can be projected into a configuration manifold by means of a one-form $dW$, then the integral curves of the projected vector field $X_H^{dW}$ can be transformed into
integral curves of the vector field $X_H$ provided that $W$ is a solution of the Hamilton--Jacobi equation. This procedure allows us
to reduce the dynamics to a lower dimensional manifold in which we integrate the motion.
On the other hand, the interest of a Nambu--Jacobi structure resides
in its role in the description of dynamics in terms of several Hamiltonian functions. It
appears in fluid dynamics, for instance. Here, we derive an explicit expression
for a geometric Hamilton--Jacobi equation on a Nambu--Jacobi manifold and apply it to the third-order Riccati differential equation as an example.
\end{abstract}

%\keywords{Nambu--Jacobi structure; Nambu--Jacobi manifold; Geometric Hamilton--Jacobi theory; Riccati equations; Lagrangian submanifolds.}

\section{Motivation}

The Hamilton--Jacobi equation (HJ equation) constitutes the third complete formulation of classical mechanics, after Newtonian and Hamiltonian mechanics.
It is a first-order partial differential equation for a generating function $S(q^i,t)$ on a $n$-dimensional configuration manifold $Q$ with 
local canonical coordinates $\{q^i, i=1,\dots,n\}$ and $H=H(q^i,p_i)$ is the Hamiltonian function of the system on $T^{*}Q$,
which is locally coordinated by $\{q^i,p_i, i=1,\dots,n\}$. Explicitly,

\begin{equation}\label{tdepHJ}
 \frac{\partial S}{\partial t}+H\left(q^i,\frac{\partial S}{\partial q^i}\right)=0.
\end{equation}

This equation is particulary useful for the identification of conserved quantities and 
roots in the philosophy of finding a canonical transformation with generating function $S(q^i,t)$ that leads to the equilibrium
of a mechanical system \cite{Marsden,Arnold}. 
The generating function $S(q^i,t)$ is also interpreted as the action of a variational principle,
\begin{equation}
S=\int_{(q_1,t_1)}^{(q_n,t_n)} L(q(t),\dot{q}(t),t)dt
\end{equation}
such that the condition $\frac{\delta S}{\delta t}=0$ retrieves the Hamiltonian equations \cite{Gold}.

It is possible to separate the temporal dependency on $S$ through the Ansatz $S=W(q^1,\dots,q^n)-Et$,
where $E$ is the total energy of the system. This choice gives rise to the {time-independent HJ equation} \cite{Marsden,Gold},
which can be interpreted geometrically:
\begin{equation}\label{HJeq1}
 H\left({q}^i,\frac{\partial W}{\partial {q}^i}\right)=E.
\end{equation}

%  H\left(q,\frac{\partial S}{\partial q}(q,\widetilde{q},t)\right)+\frac{\partial S}{\partial t}(q,\widetilde{q},t)=0,
% \end{equation}

Concerning the geometric interpretation of a HJ theory, the primordial observation is, on a symplectic phase space, that if a Hamiltonian
vector field $X_{H}:T^{*}Q\rightarrow TT^{*}Q$ can be projected into a vector field $X_H^{dW}:Q\rightarrow TQ$ on a lower
dimensional manifold by means of a 1-form $dW$, then the integral curves of the projected
vector field $X_{H}^{dW}$ can be transformed into integral curves of $X_{H}$ provided that $W$ is a solution of \eqref{HJeq1}. If we define the projected vector field as:
\begin{equation}
 X_H^{dW}=T\pi\circ X_H\circ dW,
\end{equation}
where $T{\pi}$ is the induced projection on the tangent space, $T{\pi}:TT^{*}Q\rightarrow T^{*}Q$ by the canonical projection
$\pi:T^{*}Q\rightarrow Q$, it implies the commutativity of the diagram below:

\[
\xymatrix{ T^{*}Q
\ar[dd]^{\pi} \ar[rrr]^{X_H}&   & &TT^{*}Q\ar[dd]^{T\pi}\\
  &  & &\\
 Q\ar@/^2pc/[uu]^{dW}\ar[rrr]^{X_H^{dW}}&  & & TQ}
\]
\noindent
provided that $dW$ is a Lagrangian submanifold, since $dW$ is exact and then it is closed. This condition gave rise to the introduction of Lagrangian submanifolds in dynamics was necessary.
Lagrangian submanifolds are very important objects in Hamiltonian mechanics, since the dynamical equation (hamiltonian or lagrangian)
can be described as lagrangian submanifolds of convenient symplectic manifolds.

The pioneer in this purpose was Tulczyjew who characterized the image of local Hamiltonian vector fields on a symplectic manifold $(M,\omega)$ as lagrangian submanifolds
of a symplectic manifold $(TM,\omega^{T})$, where $\omega$ is the tangent lift of $\omega$ to $TM$ \cite{Tulczy}. This result was later generalized to Poisson manifolds \cite{GrabUrb} and Jacobi manifolds \cite{IbaLeonMarrDiego1}.

Using the approach discussed above, the HJ theory has also been extended to nonholonomic mechanics, geometric mechanics on Lie algebroids, singular systems, control theory,
classical field theories and different geometric backgrounds \cite{CGMMMLRR1,CGMMMLRR,LeonDiegoMarrSalVil,LeonIglDiego,LeonMarrDiego08,LeonMarrDiegoVaq}.
This proves the wide applicability of the geometric interpretation of the HJ theory and its recent interest among the scientific community.

In this paper, we deal with Nambu--Jacobi manifolds (NJ manifolds).
The NJ structure is a generalization of Nambu--Poisson structures (NP structures) and both appeared as an extension 
of mechanics on even and odd higher-dimensional phase spaces. In particular, both NJ and NP structures include
$n$-dimensional brackets which are very useful for descriptions of physical systems equipped with several Hamiltonian functions.
These structures have also played an important role in the study of Dirac's contraints and Nambu's mechanics \cite{Flato}.
The first bracket of order $n>2$ was the original three-dimensional Nambu bracket \cite{Nambu}, defined in terms of a Jacobian:
\begin{equation}\label{nambubracket}
 \{H_1,H_2,f\}=\frac{\partial (H_1,H_2,f)}{\partial (x,y,z)},
\end{equation}
with canonical variables satisfying $[x,y,z]=1$ and interpreted
as a bracket defined by the canonical volume form in $\mathbb{R}^3$. This bracket served as a bracket of a three-dimensional phase space
for the dynamics of particles composed by three quarks \cite{Nambu}. Also, it has showed its applicability
in noncanonical theories of perfect fluid dynamics \cite{guha,nb}. For example, consider Eulerian variables
for an incompressible fluid in 3D governed by the vorticity equation:
\begin{equation}
 \frac{\partial \Omega}{\partial t}+(u\nabla)\Omega-(\Omega \nabla)u=0,
\end{equation}
where $u=(u_1,u_2,u_3)$ is the velocity, the vorticity is $\Omega=(\Omega_1,\Omega_2,\Omega_3)$ and they are related
by $\Omega=\nabla \times u$ and $\nabla u=0$.
The total energy
\begin{equation}
 H=\frac{1}{2}\int \Omega u d^3x=-\frac{1}{2}\int \Omega A d^3x
\end{equation}
and the helicity
\begin{equation}
 h=\frac{1}{2}\int \Omega u d^3x
\end{equation}
are conserved, assuming that $u$ vanishes at infinity and $A$ is a vector potential such that $u=-\nabla\times A$, $\nabla A=0$.
The energy and helicity are Casimir functions acting like Hamiltonians for Nambu mechanics.
Now, the evolution of an arbitrary functional $F=F(\Omega)$ will be given by the Nambu bracket defined in \eqref{nambubracket}:
\begin{equation}
 \frac{dF}{dt}+\{F,h,H\}=0.
\end{equation}

Afterwards, a generalization of the three-dimensional Nambu bracket to an $n$-order bracket was provided \cite{Nambu,Takh},
as the $n$-dimensional Jacobian:

% \begin{equation}\label{defjacobian}
%  \{H_1,\dots,H_{n-1},f\}=\left|\begin{array}{ccccc}
%                           \frac{\partial H_1}{\partial x^1}& \quad &\dots& \quad &\frac{\partial H_1}{\partial x^n} \\
%                              \frac{\partial H_2}{\partial x^1}& \quad &\dots& \quad &\frac{\partial H_2}{\partial x^n}\\
%                             \dots&\quad & \dots&\quad & \dots \\
%                           \frac{\partial H_{n-1}}{\partial x^1}&\quad& \dots& \quad &\frac{\partial H_{n-1}}{\partial x^n}\\
%                           \frac{\partial f}{\partial x^1}&\quad& \dots& \quad &\frac{\partial f}{\partial x^n}
%                          \end{array}\right|
% \end{equation}

\begin{equation}\label{stebracket}
 \{H_1,\dots,H_{n-1},f\}=\frac{\partial (H_1,\dots,H_{n-1},f)}{\partial (x^1,\dots,x^n)}.
\end{equation}

Many hierarchies of differential equations are endowed with a recursion
relation that generates subsequent members of the hierarchy, conserved quantities and multiple compatible Hamiltonian functions
describing the differential equation. For example, in \cite{FStrachan} it is shown that the dispersionless Toda hierarchy 
can be described in terms of multiple compatible Hamiltonian functions,
% Indeed, there exist
% an infinite family of commuting flows given by the system
% \begin{align}
% u_{t_n}&=F_nu_x+uQ_nv_x,\nonumber\\
% v_{t_n}&=F_nu_x+vQ_nu_x
%  \end{align}
%  where $Q$ and $F$ are symmetric polynomials of $u$ and $v$ and have been labelled with integer values and follow conservation laws
%  $F_{n,u}=vQ_{n,v}$ and $F_{n,v}=uQ_{n,u}$. The Hamiltonian structure is given by the zero-curvature metric
%  \begin{equation}
%   g=2\frac{du}{u}\frac{dv}{v}
%  \end{equation}
%  and multiple Hamiltonians are given by
 \begin{equation}\label{hamfunc}
  H_n=\int{h_n(u)dx},\quad h_n=(n+1)^{-1}Q_{n+1},
 \end{equation}
where $Q$ are symmetric polynomials of $u$ and depend on a zero-curvature metric. The evolution of an observable, let us say $f$, is governed by the $n$-dimensional bracket \eqref{stebracket} of the observable and Hamiltonian functions
 in \eqref{hamfunc}.

 Later, the Nambu bracket was generalized by Takhtajan \cite{Takh} by imposing the fundamental identity:
 
 \begin{align}\label{fundidentity}
\{H_1,\dots,H_{n-1},&\{g_1,\dots,g_n\}\}=\nonumber\\
&\sum_{i=1}^n \{g_1,\dots,g_{i-1},\{H_1,\dots,H_{n-1},g_i\},g_{i+1},\dots,g_n\}
\end{align}
for real valued and infinitely differentiable functions $H_1,\dots,H_{n-1},g_1,\dots,g_n.$
Expression \eqref{fundidentity} is merely an extension of the well-known Jacobi identity. This led him to the introduction of a NP manifold as a manifold
whose ring of functions is endowed with an $n$-dimensional bracket \eqref{stebracket} that satisfies \eqref{fundidentity}.
This fundamental identity was independently considered by other authors (see \cite{SW}).
Moreover, these axioms are just the ones of an $n$-Lie algebra introduced Filippov \cite{Fi} in 1985.
The extension of this bracket to an analogous of Jacobi bracket was simultaneosuly considered by several authors \cite{GM,IbaLeonMarrDiego,iba,MVV}.

The HJ theory for NP manifolds was devised by the present authors in \cite{NPLeonSardon}. Our aim here is to proceed similarly
in the case of a NJ manifold. In this way, the outline of the paper is the following: in Section 2, we recall the fundamentals
on the geometry of NJ manifolds, their structural theorem and present dynamics on such manifolds. In particular, we
focus on a particular case of NJ manifolds, that is the case of volume NJ manifolds. For the development of a geometric HJ theory,
we introduce the notion of Lagrangian submanifolds of a NJ manifold. In Section 3, we develop a geometric HJ theory on volume NP manifolds
and obtain an explicit expression for a HJ equation. 
To finish, in Section 4, we propose an application of the geometric HJ theory on volume NJ manifolds through an
example of physical interest: a four-dimensional system of Riccati first-order differential equations. We derive a volume NJ structure
for such equation and then apply the HJ theory. A particular solution for the equation is indicated.

We introduce the following notation: we will denote the algebra of $C^{\infty}$ real-valued functions as $C^{\infty}(M,\mathbb{R})$,
$\mathfrak{X}(M)$ is the $C^{\infty}(M,\mathbb{R})$ module of vector fields on $M$. The space $\mathcal{V}^{k}(M)$ is the space of $k$-vectors
on $M$ and $\Omega^k(M)$ is the space of $k$-forms on $M$.

Assume all mathematical objects to be $C^{\infty}$, globally defined and that manifolds are connected. This permits us to suit technical
details while highlightning the main aspects of the theory.

\section{Nambu--Jacobi manifolds}

Let us consider a generalized almost Jacobi manifold $(M,\{,\dots,\})$ where $M$ is an $m$-dimensional manifold and the bracket 
is a generalized almost Jacobi bracket $\{,\dots,\}:C^{\infty}(M,\mathbb{R})\times\dots ^{n)} \times C^{\infty}(M,\mathbb{R})$ of
order $n$. It has two fundamental properties:
 \begin{equation*}
  \{f_1,\dots,f_n\}=\epsilon_{\sigma}\{f_{\sigma(1)},\dots,f_{\sigma(n)}\},
 \end{equation*}
that implies that the bracket is skew-symmetric and it is a first-order linear differential operator on $M$ with respect to each argument

\begin{align*}
 \{f_1g_1,f_2\dots,f_n\}=f_1\{g_1,&f_2,\dots,f_n\} \nonumber \\
 &+g_1\{f_1,f_2,\dots,f_n\}-f_1g_1\{1,f_2,\dots,f_n\}
\end{align*}
for $f_1,\dots,f_n\ \text{and}\ g_1\in C^{\infty}(M,\mathbb{R})$. We can define the multivectors $\Lambda\in \mathcal{V}^n(M)$ and $\Box\in \mathcal{V}^{n-1}(M)$ as follows:

\begin{align*}
 &\Box(df_1,\dots,df_{n-1})=\{1,f_1,\dots,f_{n-1}\},\\
 &\Lambda(df_1,\dots,df_n)=\{f_1,\dots,f_n\}+\sum_{i=1}^n (-1)^i f_i \Box (df_1,\dots,d\widehat{f_i},\dots,df_n)
\end{align*}

Conversely, any pair $(\Lambda,\Box)\in \mathcal{V}^n(M)\oplus \mathcal{V}^{n-1}(M)$
defines a generalized almost Jacobi bracket of order $n$ on $M$, given by:

\begin{align}
 \{f_1,\dots,f_n\}=\Lambda(df_1,&\dots,df_n)\nonumber \\
 &+\sum_{i=1}^n (-1)^{i-1} f_i \Box (df_1,\dots,d\widehat{f_i},\dots,df_n)
\end{align}
\noindent
such that $(M,\{,\dots,\})$ a generalized almost Jacobi structure. If $\Box=0$, we recover a generalized almost Poisson structure. Furthermore, if we add the integrability condition

\begin{align}
 \{f_1,\dots,f_{n-1},&\{g_1,\dots,g_n\}\}\nonumber\\
 &=\sum_{i=1}^n \{g_1,\dots,g_{i-1},\{f_1,\dots,f_{n-1},g_i\},g_{i+1},\dots,g_n\},
\end{align}

\noindent
we have that the pair $(M,\{,\dots,\})$ is a Nambu--Jacobi structure, or equivalently, we say that $(M,\Lambda,\Box)$ is a Nambu--Jacobi manifold.
In the case that $\Box=0$, we have a Nambu--Poisson structure.

We define a vector field $X_{H_1,\dots,H_{n-1}}^{\Lambda}$, understood as the vector field
  on a NP manifold $(M,\Lambda)$ as:
  \begin{equation}
   X_{f_1,\dots,f_{n-1}}^{\Lambda}=\sharp_{\Lambda}(df_1\wedge\dots\wedge df_{n-1})
  \end{equation}
and define the morphism $\sharp_{\Lambda}:\Lambda^{n-1}(M)\rightarrow \mathfrak{X}(M)$ induced by $\Lambda$.
Similarly, $X_{f_1,\dots,f_{n-1}}^{\Box}$ can be understood as a vector field
  on a NP manifold $(M,\Box)$ as:
\begin{equation}
   X_{f_1,\dots \widehat{f}_i \dots f_{n-1}}^{\Box}=\sharp_{\Box}(df_1\wedge\dots\wedge d\widehat{f}_i\wedge \dots\wedge df_{n-1})
  \end{equation}
and defines the morphism $\sharp_{\Box}:\Lambda^{n-2}(M)\rightarrow \mathfrak{X}(M)$ induced by $\Box$.
The operators $(\Lambda,\Box)$ induce the following pairings, correspondingly:
\begin{equation}
 \Lambda(\alpha_1,\dots,\alpha_{n-1},\beta)=\langle \sharp_{\Lambda}(\alpha_1,\dots,\alpha_{n-1}),\beta \rangle
\end{equation}
and
\begin{equation}
 \Box(\alpha_1,\dots,\alpha_{n-2},\beta)=\langle \sharp_{\Box}(\alpha_1,\dots,\alpha_{n-2}),\beta \rangle
\end{equation}
for $\alpha_1,\dots,\alpha_{n-1},\beta\in \Omega^{1}(M).$ A vector field on a NJ manifold $(M,\Lambda,\Box)$ is defined as:
  \begin{equation}\label{defvf}
   X_{f_1,\dots,f_{n-1}}=X_{f_1,\dots,f_{n-1}}^{\Lambda}+\sum_{i=1}^{n-1} (-1)^{i-1} f_i X_{H_1,\dots,\widehat{f}_i,\dots,f_{n-1}}^{\Box}.
  \end{equation}
  
  \medskip

So, we define the characteristic distribution on a point $x\in M$ for a NJ manifold $(M,\Lambda,\Box)$ as:
\begin{equation}\label{cd}
 \mathcal{C}_x=\langle X_{f_1,\dots,f_{n-1}}(x)\rangle=\text{Im}(\sharp_{\Lambda}((x))+\text{Im}(\sharp_{\Box}(x)).
\end{equation}

Now, let us point out some properties of NJ manifolds.

\medskip
{\bf Properties:}

\medskip

 Let $(M,\Lambda,\Box)$ be a NJ manifold, $M$ has dimension $m$, $\Lambda$ is of order $n$ and $\Box$ of order $(n-1)$ and such that $n>2$. Then,
 \begin{enumerate}
  \item $(M,\Lambda)$ is a NP structure on $M$ of order $n$.
  \item $(M,\Box)$ is a NP structure on $M$ of order $n-1$.
  \item For every point $x\in M$ where $\Lambda(x)\neq 0$, there exists a one-form $\theta_x\in T^{*}M$ such that
  \begin{equation}
   \iota_{\theta_x}\Lambda(x)=\Box(x).
  \end{equation}
  \item The Lie derivative $\mathcal{L}_{X_{f_1,\dots,f_{n-2}}^{\Box}}\Lambda=0.$ 
\end{enumerate}

\subsection{Volume Nambu--Jacobi manifolds}

 A volume Nambu--Jacobi manifold $(M,\Omega,\theta)$, (henceforth VNJ manifold) is a Nambu--Jacobi manifold in which
 $\text{dim}(M)=n$, where $\Omega$ is a volume form on $M$ and $\theta$ is a one-form that is closed, $d\theta=0$. 
 There is an associated pair $(\Lambda_{\Omega},\Box_{\Omega})\in \mathcal{V}^n(M)\oplus \mathcal{V}^{n-1}(M)$ for a VNJ
 manifold, operating as follows:
  \begin{equation}
  \Lambda_{\Omega}(df_1,\dots,df_n)=\{f_1,\dots,f_n\}_{\Lambda_{\Omega}}
 \end{equation}
 and the $(n-1)$-order skew symmetric tensor $\Box_{\Omega}$ defined by:
 \begin{equation}
  \Box_{\Omega}=\iota_{\theta}\Lambda_{\Omega}
 \end{equation}
 where the bracket induced by $\Lambda_{\Omega}$ is defined by:
 \begin{equation}\label{bracketindlam}
\Omega\{f_1,\dots,f_n\}_{\Lambda_{\Omega}}=df_1\wedge \dots \wedge df_n
 \end{equation}

 A particular example is $M=\mathbb{R}^n.$ Choosing canonical coordinates $\{x^i, i=1,\dots,n\},$
 the volume form here is $\Omega_{\mathbb{R}^n}=dx^1\wedge \dots \wedge dx^n,$
 the multivectors $(\Lambda,\Box)\in \mathcal{V}^n(M)\oplus \mathcal{V}^{n-1}(M)$ take the form:
 
 \begin{equation}\label{bivec}
 \Lambda_{\mathbb{R}^n}=\frac{\partial}{\partial x^1}\wedge \dots \wedge \frac{\partial}{\partial x^n},\qquad \Box_{\mathbb{R}^n}=\frac{\partial}{\partial x^1}\wedge \dots \wedge \frac{\partial}{\partial x^{n-1}}.
\end{equation}
and $\theta$ is locally expressed as $\theta_{\mathbb{R}^n}=(-1)^{n}dx^n$.
The brackets induced by $(\Lambda_{\Omega},\Box_{\Omega})$ on a volume manifold $(M,\Omega,\theta)$ take the form:
 
 \begin{equation}
 \{f_1,\dots,f_n\}_{\Lambda_{\mathbb{R}^n}}=\frac{\partial (f_1,\dots,f_n)}{\partial (x^1,\dots,x^{n})},
\end{equation}
 and
 \begin{equation}
 \{f_1,\dots,\widehat{f}_i,\dots,f_n\}_{\Box_{\mathbb{R}^n}}=\frac{\partial (f_1,\dots,\widehat{f}_i,\dots,f_n)}{\partial (x^1,\dots,x^{n-1})}.
\end{equation}

\begin{definition}
 Consider a VNJ manifold $(M,\Omega,\theta)$ and $H_1,\dots,H_{n-1} \in C^{\infty}(M,\mathbb{R})$. Then,
 the expression in coordinates on  $\mathbb{R}^n$ for a Hamiltonian vector field on a NJ manifold \eqref{defvf} is:

{\begin{footnotesize}
\begin{align}\label{fullvf}
 X_{H_1,\dots,H_{n-1}}=&\sum_{k=1}^{n-1} (-1)^{n-k} \frac{\partial (H_1,\dots,H_{n-1})}{\partial (x^1,\dots,\widehat{x}^k,\dots,x^n)}\frac{\partial}{\partial x^k}+\frac{\partial (H_1,\dots,H_{n-1})}{\partial (x^1,\dots,x^{n-1})}\frac{\partial}{\partial x^n}\nonumber\\
 &+\sum_{i=1}^{n-1} (-1)^{i-1} H_i \frac{\partial (H_1,\dots,\widehat{H}_i,\dots,H_{n-1})}{\partial (x^1,\dots, x^{n-2})}\frac{\partial}{\partial x^{n-1}}\nonumber\\
 &+\sum_{i=1}^{n-1}\sum_{j=1}^{n-2} (-1)^{n-j+i-2} H_i \frac{\partial (H_1,\dots,\widehat{H}_i,\dots,H_{n-1})}{\partial (x^1,\dots,\widehat{x}^j,\dots, x^{n-1})}\frac{\partial}{\partial x^{j}}.
\end{align}
\end{footnotesize}}
This is a Hamiltonian system on a VNJ manifold.
\end{definition}

We can extend the model of VNJ manifolds proposed to a more general situation. If $(M,\Lambda)$ is a NP with $n>2$
 and $\theta$ is a closed one-form on $M$, then $(M,\Lambda,\iota_{\theta}\Lambda)$ is a NJ structure of order $n$ on $M$.
 Furthermore, if $\Lambda_{\Omega}$ is a tensor of order $n>2$ associated with a volume form $\Omega$ on $M$ and $\theta$ is a one-form,
  then the triple $(M,\Lambda_{\Omega}, \iota_{\theta}\Lambda_{\Omega})$ is a NJ structure \cite{iba}.
Now, let us now introduce the structural theorem of NJ manifolds \cite{iba}. The structure in leaves of the foliation encompasses
the former examples, which are canonical examples for these manifolds.
 
\begin{theorem}[Structure theorem]
 Let $(M,\Lambda,\Box)$ be a Nambu--Jacobi manifold of order $n\geq 3$ and $x\in M$. Suppose that $\mathcal{D}$ is a characteristic
 foliation of $M$ and that $L$ is the leaf of $\mathcal{D}$ passing through $x$. Then, $(\Lambda,\Box)$ reduces to a Nambu--Jacobi
 structure $(\Lambda_L,\Box_L)$ on $L$,
 \begin{enumerate}
  \item If $\Lambda(x)\neq 0$, then $L$ has dimension $n$ associated with a volume form on $L$ and there exists a closed one-form $\theta_L$
  on $L$ such that $\Box_L=\iota_{\theta_L}\Lambda_L.$
  \item If $\Lambda(x)=0$ and $\Box(x)\neq 0$, then $L$ has dimension $(n-1)$ and $\Lambda_L=0$. Moreover,
  \begin{itemize}
   \item If $n>3$, then $\Box_L$ is a Nambu--Poisson structure of order $(n-1)$ associated with a volume form on $L$.
   \item If $n=3$, then $\Box_L$ is a symplectic structure.
  \end{itemize}
  \item If $\Lambda(x)=0$ and $\Box(x)=0$, then $L=\{0\}$ and the induced Nambu--Jacobi structure is trivial.

 \end{enumerate}

\end{theorem}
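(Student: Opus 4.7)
The plan is to first establish that the characteristic distribution $\mathcal{C}$ defined in \eqref{cd} is integrable in the Stefan--Sussmann sense, so that the leaf $L$ through $x$ is a well-defined immersed submanifold on which $(\Lambda,\Box)$ can be restricted. Since $\mathcal{C}$ is spanned by Hamiltonian vector fields \eqref{defvf}, the fundamental identity for a Nambu--Jacobi bracket implies that brackets of such Hamiltonian fields are again Hamiltonian, giving involutivity. Property 4 states that vector fields of the form $X^\Box_{f_1,\dots,f_{n-2}}$ preserve $\Lambda$, and the analogous identity extracted from the fundamental identity shows that $X^\Lambda_{f_1,\dots,f_{n-1}}$ preserves both $\Lambda$ and $\Box$. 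Hence $(\Lambda,\Box)$ descends to a well-defined Nambu--Jacobi structure $(\Lambda_L,\Box_L)$ on $L$, and it remains only to identify its type at $x$.

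For case (1), assume $\Lambda(x)\neq 0$. Property 3 yields $\theta_x$ with $\iota_{\theta_x}\Lambda(x)=\Box(x)$, so $\mathrm{Im}(\sharp_\Box(x))\subseteq\mathrm{Im}(\sharp_\Lambda(x))$ and therefore $\mathcal{C}_x=\mathrm{Im}(\sharp_\Lambda(x))$. Invoking the decomposability theorem for Nambu--Poisson tensors of order $n\geq 3$ from Property 1, $\Lambda$ is locally a wedge product of $n$ pointwise independent vector fields near $x$, so $\dim\mathcal{C}_x=n$ and thus $\dim L = n$. On $L$, $\Lambda_L$ is dual to a volume form $\Omega_L$, as in \eqref{bracketindlam}. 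Upgrading $\theta_x$ to a smooth local one-form $\theta$ satisfying $\iota_\theta\Lambda=\Box$ (possible away from zeros of $\Lambda$ using the decomposability), set $\theta_L:=\theta|_L$, so $\Box_L=\iota_{\theta_L}\Lambda_L$. Closedness of $\theta_L$ is extracted from the fundamental identity by substituting the local volume expression for $\Lambda$ and the relation $\Box=\iota_\theta\Lambda$: the cross-terms collapse to the statement $d\theta_L=0$.

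For case (2), assume $\Lambda(x)=0$ and $\Box(x)\neq 0$. Since $\mathrm{Im}(\sharp_\Lambda(x))=0$, the characteristic space $\mathcal{C}_x$ coincides with $\mathrm{Im}(\sharp_\Box(x))$, and the leaf $L$ is swept out by flows of $\Box$-Hamiltonian vector fields. By Property 4 these flows preserve $\Lambda$, so $\Lambda$ remains zero along $L$ and $\Lambda_L\equiv 0$. Then $(L,\Box_L)$ is purely a Nambu--Poisson manifold of order $n-1$ (Property 2). Applying the Nambu--Poisson structure theorem to $\Box_L$: if $n-1\geq 3$, $\Box_L$ is associated with a volume form on an $(n-1)$-dimensional leaf; if $n-1=2$, $\Box_L$ is a rank-two Poisson bivector, that is, a symplectic structure on a $2$-dimensional leaf. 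Case (3) is immediate, since both $\sharp_\Lambda(x)$ and $\sharp_\Box(x)$ vanish and so $\mathcal{C}_x=\{0\}$, making $L$ reduce to the single point $\{x\}$ with the trivial induced structure.

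The principal obstacle is case (1), specifically the promotion of the pointwise $\theta_x$ provided by Property 3 to a smooth, \emph{closed} one-form $\theta_L$ on $L$. Pointwise existence must be upgraded to smoothness using the local decomposability of $\Lambda$, and then the $n$-ary fundamental identity has to be massaged, in adapted volume coordinates on $L$, into the single condition $d\theta_L=0$; this is the computational core of the theorem. A secondary technical point is to justify in case (2) that the characteristic leaf of the full NJ structure coincides with the Nambu--Poisson leaf of $\Box$ on the locus $\{\Lambda=0\}$, which follows once $\Lambda_L\equiv 0$ is established and the remaining Hamiltonian fields are recognized as those generating the NP dynamics of $\Box_L$.
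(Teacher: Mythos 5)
The paper does not actually prove this theorem: its ``proof'' is a one-line citation to the reference \cite{iba} of Ib\'a\~nez, de Le\'on, Marrero and Padr\'on, where the result is established. So there is no in-paper argument to compare against; what you have written is an independent reconstruction, and it does follow the general strategy of \cite{iba} (integrate the characteristic distribution, restrict $(\Lambda,\Box)$ to a leaf, and classify the leaf using the local decomposability of Nambu--Poisson tensors of order $\geq 3$ together with the pointwise relation $\Box=\iota_{\theta}\Lambda$). However, as a proof it has genuine gaps at exactly the places you yourself flag. The central one is the closedness of $\theta_L$ in case (1): saying that ``the cross-terms collapse to the statement $d\theta_L=0$'' after substituting into the fundamental identity is not an argument --- this is the computation that carries the whole of case (1), and it is nontrivial (in \cite{iba} it is extracted from the compatibility conditions characterizing Nambu--Jacobi pairs, not read off directly from \eqref{fundidentity}). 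Until that calculation is exhibited, case (1) is unproved. The promotion of the pointwise $\theta_x$ of Property 3 to a smooth local one-form also needs an explicit construction in the decomposing coordinates, not just the remark that it is ``possible.''

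A second, smaller gap is in case (2). The leaf $L$ is an integral manifold of the full characteristic distribution $\mathcal{C}$, which is generated by the Hamiltonian vector fields $X_{f_1,\dots,f_{n-1}}=X^{\Lambda}_{f_1,\dots,f_{n-1}}+\sum_i(-1)^{i-1}f_i X^{\Box}_{\dots}$, not by the pure $\Box$-Hamiltonian fields alone. Property 4 only tells you that the $X^{\Box}$ flows preserve $\Lambda$; to conclude $\Lambda\equiv 0$ on $L$ you must show that the flows of the \emph{full} generators preserve the locus $\{\Lambda=0\}$, e.g.\ by establishing a Lie-derivative identity of the form $\mathcal{L}_{X_{f_1,\dots,f_{n-1}}}\Lambda = (\text{tensor depending linearly on }\Lambda)$, so that $\Lambda$ satisfies a linear ODE along integral curves and vanishes identically once it vanishes at one point. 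The same identity is what underwrites your opening claim that $(\Lambda,\Box)$ ``descends'' to the leaf at all, i.e.\ the Stefan--Sussmann integrability of $\mathcal{C}$ and the well-definedness of $(\Lambda_L,\Box_L)$; asserting it from the fundamental identity without stating it is where the involutivity argument is actually hiding. Cases (3) and the dimension counts are fine.
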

\begin{proof}
 Proof of this theorem can be found in reference \cite{iba}, where it was formerly stated.
\end{proof}

%  
%  
% \begin{equation}\label{deff}
%  X_{H_1,\dots,H_{n-1}}=X_{H_1,\dots,H_{n-1}}^{\Lambda}+\sum_{i=1}^{n-1} (-1)^{i-1} H_i X_{H_1,\dots,\widehat{H}_i,\dots,H_{n-1}}^{\Box}
% \end{equation}

\subsection{Lagrangian submanifolds}

Let $(M,\Lambda,\Box)$ be a NJ manifold. We say that a submanifold $N\subset M$ is $j$-lagrangian $\forall x\in N$, $1\leqslant j\leqslant n-1,$ if:
\begin{equation}\label{lagsub}
 \sharp_{\Lambda} \text{Ann}^j(T_xN)= \mathcal{C}_x\cap T_xN
\end{equation}
where the annihilator is defined as:
\begin{equation}\label{ann}
 \text{Ann}^j(T_xN)=\{\alpha \in \Lambda^{n-1}(T_x^{*}M) | \iota_{v_1\wedge \dots\wedge x_j}\alpha=0,\forall v_1,\dots,v_j\in T_xN\}
\end{equation}
and recall the definition of $\mathcal{C}_x$ in \eqref{cd}.
\medskip

In the particular case of a VNJ manifold, expression \eqref{lagsub} reduces to:
\begin{equation}\label{lagsub2}
 \sharp_{\Lambda} \text{Ann}^j(T_xN)=\sharp_{\Lambda} \Lambda^{n-1}(T^{*}_xM)\cap T_xN.
\end{equation}

\begin{theorem}
 Given a VNJ structure $(M,\Omega,\theta)$, every submanifold $N$ of dimension $(n-1)$ is $(n-1)$-lagrangian. No other
 lagrangian submanifolds exist.
\end{theorem}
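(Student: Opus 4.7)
The plan is to reduce the $(n-1)$-lagrangian condition at each point to a linear-algebraic statement about $T_xN \subset T_xM$ and then split on $k = \dim N$. Because $\Omega$ is a volume form, the multivector $\Lambda_{\Omega}$ is nowhere vanishing, and the induced musical map $\sharp_{\Lambda_{\Omega}}\colon \Lambda^{n-1}(T_x^*M)\to T_xM$ is a pointwise linear isomorphism; its inverse is, up to a fixed sign, the volume-form duality $\flat\colon v\mapsto \iota_v\Omega$. Consequently $\sharp_{\Lambda}\Lambda^{n-1}(T_x^*M)=T_xM$, so the right-hand side of \eqref{lagsub2} equals $T_xM\cap T_xN=T_xN$ and the condition to verify simplifies to $\sharp_{\Lambda}\,\mathrm{Ann}^{n-1}(T_xN) = T_xN$.

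Next, I would transport the annihilator through $\flat$. For $\alpha = \iota_v \Omega$, the standard evaluation
\[
\iota_{w_1 \wedge \cdots \wedge w_{n-1}}(\iota_v \Omega) \;=\; \pm\, \Omega(v, w_1, \dots, w_{n-1}),
\]
together with the non-degeneracy of $\Omega$, shows that $\iota_v \Omega \in \mathrm{Ann}^{n-1}(T_xN)$ if and only if $v \wedge w_1 \wedge \cdots \wedge w_{n-1} = 0$ in $\Lambda^{n}T_xM$ for every $w_1, \dots, w_{n-1} \in T_xN$. This reformulation is purely a statement about the position of $v$ relative to the subspace $T_xN$.

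Now I would carry out the trichotomy on $k = \dim N$. If $k = n-1$, picking a basis $e_1, \dots, e_{n-1}$ of $T_xN$, the wedge condition is equivalent to $v \in T_xN$ because $T_xN$ is a hyperplane in $T_xM$; hence $\mathrm{Ann}^{n-1}(T_xN) = \flat(T_xN)$, and applying $\sharp_{\Lambda}$ recovers $T_xN$, verifying the $(n-1)$-lagrangian property. If $k < n-1$, every product $w_1 \wedge \cdots \wedge w_{n-1}$ of vectors in the $k$-dimensional space $T_xN$ vanishes, so the condition is automatic and $\mathrm{Ann}^{n-1}(T_xN) = \Lambda^{n-1}(T_x^*M)$; thus $\sharp_{\Lambda}\,\mathrm{Ann}^{n-1}(T_xN) = T_xM \neq T_xN$. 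The remaining degenerate case $k = n$, i.e.\ $N=M$, fails symmetrically: any nonzero $v$ can be completed to a basis $v, w_1, \dots, w_{n-1}$ of $T_xM$ with $v\wedge w_1\wedge\cdots\wedge w_{n-1}\neq 0$, forcing $\mathrm{Ann}^{n-1}(T_xM) = 0$ and $\sharp_{\Lambda}\,\mathrm{Ann}^{n-1}(T_xM) = 0 \neq T_xM$.

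The only delicate points are establishing that $\sharp_{\Lambda_{\Omega}}$ and $\flat$ are mutually inverse up to sign and the anticommutation identity for nested interior products; both are standard but must be recorded carefully to avoid sign errors. Once these are in place the remainder is pointwise linear algebra, and the conclusion is that the $(n-1)$-lagrangian submanifolds of $(M,\Omega,\theta)$ are exactly the submanifolds of dimension $n-1$, as stated.
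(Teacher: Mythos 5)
Your argument for the positive half is correct and is essentially a coordinate-free repackaging of the paper's computation. The paper works in the canonical coordinates of \eqref{bivec}: it expands a general $\alpha\in\Lambda^{n-1}(T_x^{*}M)$ in components, observes that $\sharp_{\Lambda}\Lambda^{n-1}(T_x^{*}M)=T_xM$ so the right-hand side of \eqref{lagsub} is $T_xN$, and identifies $\text{Ann}^{n-1}(T_xN)$ for $N=\{x^n=0\}$ with the span of the $dx^1\wedge\dots\wedge\widehat{dx^i}\wedge\dots\wedge dx^n$ with $i\leq n-1$, whose $\sharp_{\Lambda}$-image is $T_xN$. Your volume-form duality $\flat(v)=\iota_v\Omega$ together with $\iota_{w_1\wedge\cdots\wedge w_{n-1}}\iota_v\Omega=\pm\,\Omega(v,w_1,\dots,w_{n-1})$ reaches the same conclusion without choosing coordinates. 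For the exclusion of other dimensions your case analysis is in fact more accurate than the paper's: for $\dim N<n-1$ the annihilator of order $n-1$ is all of $\Lambda^{n-1}(T_x^{*}M)$ (the defining condition is vacuous), not $\{0\}$ as the paper asserts, although both routes correctly conclude that \eqref{lagsub} fails.

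The genuine gap is that your trichotomy fixes $j=n-1$ throughout, while the sentence ``no other lagrangian submanifolds exist'' quantifies over every order $1\leq j\leq n-1$ in the definition \eqref{lagsub}--\eqref{ann}. You have shown that no submanifold of dimension $k\neq n-1$ is $(n-1)$-lagrangian, but not that it fails to be $j$-lagrangian for $j<n-1$, nor that an $(n-1)$-dimensional $N$ fails to be $j$-lagrangian for $j<n-1$; the paper does (briefly) address exactly these cases by asserting $\text{Ann}^{j\neq n-1}(T_xN)=\{0\}$. Worse, if you run your own duality argument for general $j$, you get $\text{Ann}^{j}(T_xN)=\flat\bigl(\{v\,:\,v\wedge v_1\wedge\cdots\wedge v_j=0,\ \forall v_1,\dots,v_j\in T_xN\}\bigr)$, which for $j=k=\dim N$ equals $\flat(T_xN)$, so that $\sharp_{\Lambda}\text{Ann}^{k}(T_xN)=T_xN=\mathcal{C}_x\cap T_xN$ and every $k$-dimensional submanifold comes out $k$-lagrangian. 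Completing your proof in the obvious way therefore collides with the ``no other'' clause rather than establishing it, so before the exclusion statement can be proved you must either restrict which orders $j$ are admitted for a submanifold of given dimension or argue why the cases $j<n-1$ are excluded by the definition; as it stands the proposal leaves the second assertion of the theorem unproved.
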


\begin{proof}
 Recall the expressions in coordinates \eqref{bivec} for the pair $(\Lambda,\Box)$.
 Considering the definition in \eqref{lagsub}, we compute the term $\sharp_{\Lambda} \Lambda^{n-1}(T_x^{*}M)$. An element $\alpha$ of $\Lambda^{n-1}(T_x^{*}M)$
has the local expression
\begin{equation}
 \alpha=\sum_{i=1}^n\alpha_{1 \dots \hat{i}\dots n} dx^1\wedge \dots \wedge d\widehat{x^i}\wedge \dots dx^n
\end{equation}
where $d\widehat{x^i}$ stands for the omitted term $dx^i$ and $\alpha_{1\dots \hat{i}\dots n}$ is the coefficient of 
the $(n-1)$-order form $\alpha\in \Lambda^{n-1}(T_x^{*}M)$ associated with the combination $dx^1\wedge \dots \wedge d\widehat{x^i}\wedge \dots dx^n$.
Then,
\begin{equation}
 \text{Im}(\alpha)=\sum_{i=1}^n (-1)^{n-i} \alpha_{1\dots \hat{i}\dots n}\frac{\partial}{\partial x^{i}}
\end{equation}
and therefore,
\begin{equation*}
\sharp_{\Lambda} \Lambda^{n-1}(T_x^{*}M)=\langle \frac{\partial}{\partial x^1},\dots,\frac{\partial}{\partial x^n}\rangle.
\end{equation*}
This implies that $\sharp_{\Lambda} \Lambda^{n-1}(T_x^{*}M)=T_xM=\langle \frac{\partial}{\partial x^1}, \dots, \frac{\partial}{\partial x^n}\rangle$
and hence, the right-hand side of expression \eqref{lagsub} in this particular case equals
\begin{equation*}
 \sharp_{\Lambda} \Lambda^{n-1}(T_x^{*}M)\cap T_xN=T_xN
\end{equation*}
Computing the left-hand side term of \eqref{lagsub}, let us assume that $N$ is $(n-1)$-dimensional and the equations
of the submanifold are $\{x^n=0\}$ in the coordinates we gave chosen. So, the tangent space to this submanifold
is generated by $TN=\langle \frac{\partial}{\partial x^1}, \dots, \frac{\partial}{\partial x^{n-1}}\rangle$.
The elements of the annihilator when $j=n-1$ are $\alpha\in \Lambda^{n-1}(T^{*}_xM)$ such that
\begin{equation}
 \iota_{\frac{\partial}{\partial x^1}\wedge \dots \wedge \frac{\partial}{\partial x^{n-1}}}\alpha=0,\quad \forall \frac{\partial}{\partial x^1},\dots,\frac{\partial}{\partial x^{n-1}}\in T_xN
\end{equation}
The $\alpha$'s fulfilling this condition are of the form
\begin{equation}
 \text{Ann}^{n-1}(TN)=\sum_{i=1}^{n-1}\alpha_{1\dots\hat{i}\dots n} dx^1\wedge \dots d\widehat{x^i}\dots dx^n.
\end{equation}
Hence, the annihilator of orfer $j-1$ for a submanifold $N$ of dimension $n-1$ is generated by
\begin{equation}
 \text{Ann}^{n-1}(TN)=\langle \frac{\partial}{\partial x^1},\dots,\frac{\partial}{\partial x^{n-1}}\rangle=TN
\end{equation}
This means that the left-hand side and the right-hand side of \eqref{lagsub} are equal, and the submanifold $N\subset M$ of
dimension $n-1$ is a Lagrangian submanifold of order $n-1$.
It is easy to see that $\text{Ann}^{j\neq n-1}(T_xN)=\{0\}$ and that for submanifolds $N$ of dimension different from $n-1$
the annihilator of any orfer $j$ equals zero, namely,  $\text{Ann}^{j}(T_xN)=\{0\}, \text{for}\ \text{dim}(M)\neq n-1.$

\end{proof}

\section{Hamilton-Jacobi theory on VNJ manifolds}

Consider a VNJ structure $(M,\Omega,\theta)$, a fibration $\pi:M\rightarrow N$ such that $\text{dim}(M)=n$ and $\text{dim}(N)=n-1$
and a section $\gamma:N\rightarrow M$. 
The projected vector field on $N$ can be defined as:
\begin{equation}\label{defcom}
 X_{H_1,\dots,H_{n-1}}^{\gamma}=T_{\pi}\circ X_{H_1,\dots,H_{n-1}}\circ \gamma.
\end{equation}

We depict it with a diagram:

\[
\xymatrix{ M
\ar[dd]^{\pi} \ar[rrr]^{X_{H_1,\dots,H_{n-1}}}&   & &TM\ar[dd]^{T\pi}\\
  &  & &\\
 N\ar@/^2pc/[uu]^{\gamma}\ar[rrr]^{X_{H_1,\dots,H_{n-1}}^{\gamma^n}}&  & & TN}
\]
\noindent
Recall from former sections, the Hamiltonian vector field on a VNJ in adapted 
coordinates can be locally written as in \eqref{fullvf}. Then, the following theorem results:

% Using definition \eqref{deff}, let us construct the vector field
% {\begin{footnotesize}
% \begin{align}\label{fullvf}
%  X_{H_1,\dots,H_{n-1}}=&\sum_{k=1}^{n-1} (-1)^{n-k} \frac{\partial (H_1,\dots,H_{n-1})}{\partial (x^1,\dots,\widehat{x}^k,\dots,x^n)}\frac{\partial}{\partial x^k}+\frac{\partial (H_1,\dots,H_{n-1})}{\partial (x^1,\dots,x^{n-1})}\frac{\partial}{\partial x^n}\nonumber\\
%  &+\sum_{i=1}^{n-1} (-1)^{i-1} H_i \frac{\partial (H_1,\dots,\widehat{H}_i,\dots,H_{n-1})}{\partial (x^1,\dots, x^{n-2})}\frac{\partial}{\partial x^{n-1}}\nonumber\\
%  &+\sum_{i=1}^{n-1}\sum_{j=1}^{n-2} (-1)^{n-j+i-2} H_i \frac{\partial (H_1,\dots,\widehat{H}_i,\dots,H_{n-1})}{\partial (x^1,\dots,\widehat{x}^j,\dots, x^{n-1})}\frac{\partial}{\partial x^{j}}
% \end{align}
% \end{footnotesize}}

\begin{theorem}
 The two vector fields $X_{H_1,\dots,H_{n-1}}^{\gamma}$ and $X_{H_1,\dots,H_{n-1}}$ are $\gamma$-related if and only if
 the following equation is satisfied:
 
 \begin{align}\label{hjenj}
\sum_{i=1}^{n-1} (-1)^{i-1} H_i& \left(d(H_1\circ \gamma)\wedge \dots \wedge \widehat{d(H_i\circ \gamma)}\wedge \dots \wedge d(H_{n-1}\circ \gamma)\right)\nonumber \\
&+d(H_1\circ \gamma)\wedge \dots \wedge d(H_{n-1}\circ \gamma)=0. 
\end{align}
 
\end{theorem}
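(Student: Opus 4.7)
The plan is to reduce $\gamma$-relatedness to a simple tangency condition on $\gamma(N)$. By construction $X_{H_{1},\dots,H_{n-1}}^{\gamma}=T\pi\circ X_{H_{1},\dots,H_{n-1}}\circ\gamma$, and $\pi\circ\gamma=\mathrm{id}_{N}$ gives $T\pi\circ T\gamma=\mathrm{id}_{TN}$; consequently $T\gamma\circ X^{\gamma}$ and $X\circ\gamma$ always project to the same vector field on $N$, so they coincide if and only if $X_{H_{1},\dots,H_{n-1}}$ is tangent to $\gamma(N)\subset M$. Work in adapted coordinates where $\pi(x^{1},\dots,x^{n})=(x^{1},\dots,x^{n-1})$ and $\gamma(x^{1},\dots,x^{n-1})=(x^{1},\dots,x^{n-1},W)$ with $W:=\gamma^{n}$. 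Then $\gamma(N)=\{F=0\}$ for $F:=x^{n}-W$, and the theorem reduces to proving that
\begin{equation*}
X_{H_{1},\dots,H_{n-1}}(F)|_{\gamma}=X^{n}|_{\gamma}-\sum_{k=1}^{n-1}X^{k}|_{\gamma}\,\partial_{k}W=0
\end{equation*}
is equivalent to the stated equation \eqref{hjenj}, where the $X^{k}$ are read off from \eqref{fullvf}.

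I would then decompose $X_{H_{1},\dots,H_{n-1}}=X^{\Lambda}+X^{\mathrm{Jac}}$ according to \eqref{defvf} and evaluate each summand on $F$. For the Nambu--Poisson part, $X^{\Lambda}(F)=\sharp_{\Lambda}(dH_{1}\wedge\cdots\wedge dH_{n-1})(dF)$ is the coefficient of the volume form in $dH_{1}\wedge\cdots\wedge dH_{n-1}\wedge dF$. Using the chain rule $\partial_{k}(H_{i}\circ\gamma)=\partial_{k}H_{i}|_{\gamma}+\partial_{n}H_{i}|_{\gamma}\,\partial_{k}W$ and the matrix-determinant (rank-one update) lemma, this scalar is identified at points of $\gamma$ with the top Jacobian $\partial(H_{1}\circ\gamma,\dots,H_{n-1}\circ\gamma)/\partial(x^{1},\dots,x^{n-1})$, i.e.\ the top coefficient of $d(H_{1}\circ\gamma)\wedge\cdots\wedge d(H_{n-1}\circ\gamma)$ on $N$.

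For the Jacobi piece, each $X^{\Box}_{H_{1},\dots,\widehat{H}_{i},\dots,H_{n-1}}$ annihilates $x^{n}$ because $\Box_{\mathbb{R}^{n}}=\partial_{1}\wedge\cdots\wedge\partial_{n-1}$, so $X^{\mathrm{Jac}}(F)|_{\gamma}=-\sum_{k<n}X^{\mathrm{Jac},k}|_{\gamma}\,\partial_{k}W$. The remaining task is to recast $\sum_{k<n}(X^{\Box}_{H_{1},\dots,\widehat{H}_{i},\dots,H_{n-1}})^{k}|_{\gamma}\,\partial_{k}W$ in wedge-product form. The clean way to do this is to exploit $dW\wedge dW=0$: after wedging on the right with $dW$, each $d(H_{j}\circ\gamma)$ may be replaced by its ``horizontal part'' $\sum_{k<n}\partial_{k}H_{j}|_{\gamma}\,dx^{k}$, and a Laplace expansion along the $dW$-factor reproduces exactly the components of $X^{\Box}$ paired against $\partial_{k}W$. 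Summing with the weights $(-1)^{i-1}H_{i}$ and combining with the Nambu--Poisson contribution produces \eqref{hjenj}; the converse follows because every step in the chain is an equivalence.

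The main obstacle is sign bookkeeping: the $(-1)^{n-k}$ from expanding the $(n-1)$-order Jacobians $J_{k}$, the $(-1)^{i-1}$ factors inherited from \eqref{defvf}, the $(-1)^{n-1-k}$ produced when transporting $dW$ past the remaining wedge factors, and the sign produced by omitting the $i$-th factor $d(H_{i}\circ\gamma)$ must all conspire to produce the clean expression in \eqref{hjenj}. Verifying the low-dimensional case $n=3$ by direct computation is a useful guard against sign errors before committing to the general argument.
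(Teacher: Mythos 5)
Your proposal is correct and follows essentially the same route as the paper: the tangency condition $X_{H_1,\dots,H_{n-1}}(x^n-\gamma^n)|_{\gamma}=0$ is exactly the scalar equation the paper obtains by directly comparing the $\partial/\partial x^n$ components of \eqref{fullvf} and \eqref{tangentfull}, and the subsequent repackaging into the wedge form \eqref{hjenj} via the chain rule $\partial_k(H_i\circ\gamma)=\partial_k H_i|_{\gamma}+\partial_n H_i|_{\gamma}\,\partial_k\gamma^n$ is the same computation the paper leaves implicit. Your suggestion to verify the $n=3$ case to control the signs is sensible, since that is where the paper's own write-up is thinnest.
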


\begin{proof}
Referring to the coordinate expressions for a VNJ structure \eqref{bivec},  we look
for a section $\gamma=\gamma(x^1,\dots,x^{n-1},\gamma^{n}(x^1,\dots,x^{n-1}))$ such that \eqref{defcom} is fulfilled. The projection 
of the vector field $X_{H_1,\dots,H_{n-1}}$ on $M$ \eqref{fullvf} onto $N$ reads:

{\begin{footnotesize}
\begin{align}
X_{H_1,\dots,H_{n-1}}^{\gamma}=&\sum_{k=1}^{n-1} (-1)^{n-k} \frac{\partial (H_1,\dots,H_{n-1})}{\partial (x^1,\dots,\widehat{x}^k,\dots,x^n)}\frac{\partial}{\partial x^k}\nonumber\\
 &+\sum_{i=1}^{n-1} (-1)^{i-1} H_i \frac{\partial (H_1,\dots,\widehat{H}_i,\dots,H_{n-1})}{\partial (x^1,\dots, x^{n-2})}\frac{\partial}{\partial x^{n-1}}\nonumber\\
 &+\sum_{i=1}^{n-1}\sum_{j=1}^{n-2} (-1)^{n-j+i-2} H_i \frac{\partial (H_1,\dots,\widehat{H}_i,\dots,H_{n-1})}{\partial (x^1,\dots,\widehat{x}^j,\dots, x^{n-1})}\frac{\partial}{\partial x^{j}}
\end{align}
\end{footnotesize}}

and its tangent image by the section $\gamma$ is:
{\begin{footnotesize}
\begin{align}\label{tangentfull}
 T{\gamma}&X_{H_1,\dots,H_{n-1}}^{\gamma}=\sum_{k=1}^{n-1} (-1)^{n-k} \frac{\partial (H_1,\dots,H_{n-1})}{\partial (x^1,\dots,\widehat{x}^k,\dots,x^n)}\left(\frac{\partial}{\partial x^k}+\frac{\partial \gamma^n}{\partial x^k}\frac{\partial}{\partial x^n}\right)\nonumber\\
 &+\sum_{i=1}^{n-1} (-1)^{i-1} H_i \frac{\partial (H_1,\dots,\widehat{H}_i,\dots,H_{n-1})}{\partial (x^1,\dots, x^{n-2})}\left(\frac{\partial}{\partial x^{n-1}}+\frac{\partial \gamma^n}{\partial x^{n-1}}\frac{\partial}{\partial x^{n}}\right)\nonumber\\
 &+\sum_{i=1}^{n-1}\sum_{j=1}^{n-2} (-1)^{n-j+i-2} H_i \frac{\partial (H_1,\dots,\widehat{H}_i,\dots,H_{n-1})}{\partial (x^1,\dots,\widehat{x}^j,\dots, x^{n-1})}\left(\frac{\partial}{\partial x^{j}}+\frac{\partial \gamma^n}{\partial x^{j}}\frac{\partial}{\partial x^{n}}\right).
\end{align}
\end{footnotesize}}

By direct comparison of \eqref{fullvf} and \eqref{tangentfull}, we obtain the following equation:

{\begin{footnotesize}
\begin{align}
 \sum_{k=1}^{n-1} &(-1)^{n-k}\frac{\partial (H_1,\dots,H_{n-1})}{\partial (x^1,\dots,\widehat{x}^k,\dots,x^n)}\frac{\partial \gamma^n}{\partial x^k}-\frac{\partial (H_1,\dots,H_{n-1})}{\partial (x^1,\dots,x^{n-1})}\nonumber\\
+&\sum_{k=1}^{n-1} (-1)^{i-1} H_i \frac{\partial (H_1,\dots,\widehat{H}_i,\dots,H_{n-1})}{\partial (x^1,\dots,x^{n-2})}\frac{\partial \gamma^n}{\partial x^{n-1}}\nonumber\\
+&\sum_{i=1}^{n-1}\sum_{j=1}^{n-2} (-1)^{n-j+i-2} H_i \frac{\partial (H_1,\dots,\widehat{H}_i,\dots,H_{n-1})}{\partial (x^1,\dots,\widehat{x}^j,\dots,x^{n-1})}\frac{\partial \gamma^n}{\partial x^j}=0,
 \end{align}
 \end{footnotesize}}

which can be rewritten in the following compact form
{\begin{footnotesize}
\begin{align}
\sum_{i=1}^{n-1} (-1)^{i-1} H_i& \left(d(H_1\circ \gamma)\wedge \dots \wedge \widehat{d(H_i\circ \gamma)}\wedge \dots \wedge d(H_{n-1}\circ \gamma)\right)\nonumber \\
&+d(H_1\circ \gamma)\wedge \dots \wedge d(H_{n-1}\circ \gamma)=0,
\end{align}
\end{footnotesize}}
which is precisely the equation given in \eqref{hjenj}.

\end{proof}

Expression \eqref{hjenj} receives the name of {Hamilton--Jacobi equation on a volume Nambu--Jacobi manifold}.
We say that $\gamma$ is a {solution of the Hamilton--Jacobi problem on a volume Nambu--Jacobi manifold}
for $(M,\Omega,\theta)$.

Next, we consider a general Nambu--Jacobi manifold $(M,\Lambda,\Box)$ where the dimension of $M$ is $m$ and the order
of the multivector $\Lambda$ is $n$. Consider a fibration $\pi:M\rightarrow N$ 
over an $n$-dimensional manifold $N$. Take $\gamma:N\rightarrow M$ a section of $\pi$ such that $\pi\circ \gamma=\text{Id}_N$
and  $\gamma(N)$ is a Lagrangian submanifold of $(M,\Lambda,\Box)$.
An additional hypothesis is that $\gamma(N)$ has a clean intersection with the leaves of the characteristic
foliation $\mathcal{C}$ defined by $\Lambda$. We recall that this implies that for each leaf $L\in \mathcal{C}$,
\begin{enumerate}
 \item $\gamma(N)\cap L$ is a submanifold
\item $\text{T}(\gamma(N)\cap L)=\text{T}\gamma(N)\cap \text{T}N$
\end{enumerate}
If we are assuming that $\gamma(N)$ is a $j$-Lagrangian submanifold of $(M,\Lambda,\Box)$,
then $\gamma(N)\cap L$ is a $j$-Lagrangian submanifold of $L$ with the restricted Nambu--Jacobi structure, 
that is a volume structure.
Consequently, $j=n-1$ and $N$ has dimension $n-1$.
Now, let $H_1,\dots,H_{n-1}$ be Hamiltonian functions in $M$ and $X_{H_1 \dots H_{n-1}}$ the corresponding Hamiltonian vector field.

We define the vector field on $N$,
\begin{equation*}
 X_{H_1\dots H_{n-1}}^{\gamma}=T{\pi}\circ X_{H_1 \dots H_{n-1}}\circ \gamma.
\end{equation*}
Since every Hamiltonian vector field is tangent to the characteristic foliation, one can conclude that
\begin{theorem}
 The vector fields $X_{H_1\dots H_{n-1}}^{\gamma}$ and $X_{H_1 \dots H_{n-1}}$ are $\gamma$-related if and only if
\begin{align}\label{HJeq3}
 \sum_{i=1}^{n-1} (-1)^{i-1} H_i& \left(d(H_1\circ \gamma)\wedge \dots \wedge \widehat{d(H_i\circ \gamma)}\wedge \dots \wedge d(H_{n-1}\circ \gamma)\right)\nonumber \\
&+d(H_1\circ \gamma)\wedge \dots \wedge d(H_{n-1}\circ \gamma)=0. 
\end{align}

\end{theorem}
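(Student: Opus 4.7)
The plan is to reduce the general Nambu--Jacobi case to the volume Nambu--Jacobi (VNJ) case already established in the previous theorem, by working leaf by leaf through the characteristic foliation $\mathcal{C}$ defined by $\Lambda$. The two structural facts that make this reduction possible are: (i) every Hamiltonian vector field $X_{H_1,\dots,H_{n-1}}$ is tangent to the leaves of $\mathcal{C}$, which is built into the definition of the characteristic distribution in \eqref{cd} since $X_{H_1,\dots,H_{n-1}}(x)\in \mathcal{C}_x$; and (ii) by the clean intersection hypothesis, $T(\gamma(N)\cap L)=T\gamma(N)\cap TL$ for each leaf $L\in\mathcal{C}$. These two observations, combined with the structural theorem, imply that on each leaf $L$ through which $\gamma(N)$ passes and on which $\Lambda(x)\neq 0$, the restricted structure $(\Lambda_L,\Box_L)$ is a VNJ structure of dimension $n$, and $\gamma(N)\cap L$ is an $(n-1)$-Lagrangian submanifold of $L$ of dimension $n-1$.

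Next, I would invoke the previous theorem on each such leaf: the projected vector field $X_{H_1,\dots,H_{n-1}}^{\gamma}|_{\gamma(N)\cap L}$ and the restriction $X_{H_1,\dots,H_{n-1}}|_{L}$ are $\gamma|_{\gamma(N)\cap L}$-related if and only if the VNJ Hamilton--Jacobi equation \eqref{hjenj} holds on $\gamma(N)\cap L$. Because both vector fields stay tangent to $\mathcal{C}$, global $\gamma$-relatedness on $M$ is equivalent to leafwise $\gamma$-relatedness for every leaf met by $\gamma(N)$. The degenerate strata of the structural theorem (where $\Lambda(x)=0$) are handled separately and briefly: on those leaves $\Lambda_L=0$, so only the $\Box$-contribution to $X_{H_1,\dots,H_{n-1}}$ survives, and both sides of \eqref{HJeq3} degenerate in a compatible way, producing no extra condition.

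The final step is to recognize that the intrinsic wedge-form identity \eqref{HJeq3}, viewed as an equation of $(n-1)$-forms on $N$, is exactly the global reassembly of the leafwise VNJ equations \eqref{hjenj}: the forms $d(H_i\circ\gamma)$ on $N$ restrict on $\gamma^{-1}(\gamma(N)\cap L)$ to precisely the differentials of the pullbacks of the Hamiltonians appearing in the VNJ theorem applied to $L$. Hence the biconditional on $N$ reduces, leaf by leaf, to the biconditional already proved.

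The main obstacle, or at least the step demanding the most care, will be the bookkeeping of this last translation: ensuring that the intrinsic wedge equation on $N$ captures \emph{every} leaf's VNJ equation without loss of information, and that no extra conditions are introduced in the directions transverse to $\mathcal{C}$. The clean intersection hypothesis together with the tangency of $X_{H_1,\dots,H_{n-1}}$ to $\mathcal{C}$ is precisely what makes this transverse component trivial, so once these hypotheses are unpacked the conclusion follows from the previous theorem by a direct leafwise application.
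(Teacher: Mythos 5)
Your proposal is correct and follows essentially the same route as the paper: the paper's entire justification for this theorem is the one-line observation that Hamiltonian vector fields are tangent to the characteristic foliation, so that the clean-intersection and Lagrangian hypotheses reduce the statement, leaf by leaf via the structure theorem, to the volume Nambu--Jacobi theorem already proved. Your write-up simply makes explicit the leafwise bookkeeping that the paper leaves implicit.
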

Therefore, \eqref{HJeq3} will be called the {HJ equation for a general Nambu--Jacobi manifold} and $\gamma$
satisfying \eqref{HJeq3} will be a {solution of the HJ problem on a general Nambu--Jacobi manifold} for $(M,\Lambda,\Box)$
with a multivector $\Lambda$ of order $n$ and the dimension of the manifold $M$ is $m$. 
% \begin{lemma}
%  Given a $(n-1)$-dimensional lagr

\section{Application}

% 
% We consider classical mechanical systems described in terms of  several Hamiltonian functions $H_1,\dots,H_{n-1}$ on the cotangent bundle
% $T^{*}\mathbb{R}^n$ of the configuration manifold $E\simeq \mathbb{R}^n$. 
% We take canonical bundle coordinates $\{x^1,\dots,x^n\}$ and consider the canonical projection
%  $\pi(x^1,\dots,x^n)=(x^1,\dots,x^{n-1})$.
% In bundle coordinates, the section is $\gamma=(x^1,\dots,x^{n-1},\gamma^n(x^1,\dots,x^{n-1}))$,

% and the projected Hamiltonian vector field is given by
% \begin{align}
% X_{H_1,\dots,H_{n-1}}^{\gamma_1,\gamma_2}=&\sum_{k=1}^{n-2} (-1)^{n-k}\frac{\partial (H_1,\dots,H_{n-1})}{\partial (x^1,\dots,\widehat{x}^k,\dots,x^n)}\frac{\partial}{\partial x^k}\nonumber\\
% &+\sum_{i=1}^{n-1}\sum_{j=1}^{n-2}(-1)^{n-h+i-2}f_i\frac{\partial (H_1,\dots,\widehat{H}_k,\dots,H_{n-1})}{\partial (x^1,\dots,\widehat{x}_j,\dots,x_{n-1})}\frac{\partial}{\partial x^j}
% \end{align}

Let us consider a four-dimensional system of first-order Riccati differential equations \cite{lucas,riccati,sardon}. It is a system of four copies of the first-order 
Riccati differential equation on $\mathcal{O}=\{(x_1,x_2,x_3,x_4)\,|\, x_i\neq x_j,i\neq j=1,\ldots,4\}\subset \mathbb{R}^4,$ given by
\begin{equation}\label{CoupledRic}
\frac{{\rm d} x^i}{{\rm d} t}=a_0(t)+a_1(t)x^i+a_2(t)(x^i)^2,\qquad i=1,\ldots,4,\\
\end{equation}
where $a_0(t),a_1(t),a_2(t)$ are arbitrary $t$-dependent functions. More generally, the term Riccati equation is used to refer to matrix equations with an analogous quadratic term,
which occur in both continuous-time and discrete-time linear-quadratic-Gaussian control. The steady-state (non-dynamic) version of these is referred to as the algebraic Riccati equation \cite{lucas,riccati,sardon}.

These equations are compatible with several Hamiltonian structures, namely, the triples 
$(\mathcal{O},\omega_l,H_{l})$ such that $H_l$ is a Hamiltonian function with respect to a
presymplectic form $\omega_l$.  

For a VNJ structure, let us consider three of them for values $l=2,3,4$ and the values of $k$ range from 1 to 4.

It reads:

{\begin{footnotesize}
$$(\mathcal{O},\omega_l,H_{l})=
\begin{cases}
\quad \mathcal{O}=\{(x_1,x_2,x_3,x_4)\,|\, x_i\neq x_j,i\neq j=1,\ldots,4\}\subset \mathbb{R}^4,\\
 \quad \omega_l=\sum_{k<l}\frac{dx^k\wedge dx^l}{(x^k-x^l)^2}+\sum_{k>l}\frac{dx^l\wedge dx^k}{(x^l-x^k)^2}\\
\quad H_l=\left(\sum_{k<l} \frac{x^kx^l}{x^k-x^l}+\sum_{k>l} \frac{x^lx^k}{x^l-x^k}\right)+b_1(t) \left(\sum_{k<l} \frac{1}{x^k-x^l}+\sum_{k>l} \frac{1}{x^l-x^k}\right)
 \end{cases}
$$
\end{footnotesize}}

To obtain a VNJ structure, we need to find a volume form $\Omega$ and a one-form $\theta$ compatible
with $(\mathcal{O},H_1,H_2,H_3)$.
The procedure consists of retrieving \eqref{CoupledRic} from the Nambu-Jacobi brackets
{\begin{footnotesize}
\begin{equation}\label{exnjbracket}
\dot{x}^i=\{H_1, H_2, H_3, x^i\},\quad \forall i=1,2,3,4.
\end{equation}
\end{footnotesize}}
characterizing the evolution of the curves $(x^i(t)), i=1,2,3,4$. According to the theory of NJ structures, 
the computation of the bracket \eqref{exnjbracket} according to \eqref{bracketindlam} is

{\begin{footnotesize}
\begin{equation}\label{defjacobian}
 \dot{x}^i=\{H_1,H_2,H_3,x^i\}=(-1)^{4+i}\sum_{\sigma}(-1)^{\sigma_{({i_1},{i_2},{i_3})}}\prod_{\sigma(i_j),k=1,2,3}\frac{\partial H_l}{\partial x^{\sigma(i_j)}}
                       \end{equation} \end{footnotesize}}
\noindent
for $i=1,2,3,4$, $j=1,2,3$, $k=1,2,3$ and $\sigma_{({i_1},{i_2},{i_3})}$ denotes all the permutations of elements $\{x^{i_1},x^{i_2},x^{i_3}\}$ when
$(i_1,i_2,i_3)$ take values from 1 to 4 and they are simultaneously different from $i$, namely, $i_1\neq i_2\neq i_3\neq i$.

The argument takes the form:
% 
% \left|\begin{array}{cccc}
%                           \frac{\partial H_1}{\partial x^1}&  \frac{\partial H_1}{\partial x^2}&  \frac{\partial H_1}{\partial x^3}& \frac{\partial H_1}{\partial x^4} \\
%                          \frac{\partial H_2}{\partial x^1}&  \frac{\partial H_2}{\partial x^2}&  \frac{\partial H_2}{\partial x^3}& \frac{\partial H_2}{\partial x^4}\\
%                            \frac{\partial H_3}{\partial x^1}&  \frac{\partial H_3}{\partial x^2}&  \frac{\partial H_3}{\partial x^3}& \frac{\partial H_3}{\partial x^4}\\
%                           \delta^i_1&  \delta^i_2&  \delta^i_3& \delta^i_4
%                          \end{array}\right|
% 
% \begin{equation}
% \left\lbrace
%   \begin{array}{l}
% \frac{\partial h^{[k]}}{\partial x_{k+1}}&= a_0(t)\sum_{i> j+1}\frac{-1}{|x_i-x_{j+1}|^2}+a_0(t)\sum_{i< j+1}\frac{1}{|x_i-x_{j+1}|^2}+a_1(t)\sum_{i> j+1}\frac{-x_i}{|x_i-x_{j+1}|^2}\\
%  +& a_1(t)\sum_{i<j+1}\frac{x_i}{|x_i-x_{j+1}|^2}+a_2(t)\sum_{i> j+1}\frac{-x_i^2}{|x_i-x_{j+1}|^2}+a_2(t)\sum_{i< j+1}\frac{x_i^2}{|x_i-x_{j+1}|^2} \\
% \frac{\partial h^{[k]}}{\partial x_{(k\neq j+1)}}=a_0(t)\left(\frac{1}{|x_{j+1}-x_k|^2}\right)+a_1(t)\left(\frac{x_{j+1}}{|x_{j+1}-x_k|^2}\right)+a_2(t)\left(\frac{x_{j+1}^2}{|x_{j+1}-x_k|^2}\right)
% \end{array}
% \right.\quad \text{for any} \quad k=1,2,3
% \end{equation}
\medskip

{\begin{footnotesize}

\begin{equation*}
\frac{\partial H_l}{\partial x^j}=F_{lj}\left(a_0(t)+a_1(t)x^l+a_2(t)(x^l)^2\right),
\end{equation*}

\begin{equation*}
\frac{\partial H_l}{\partial x^l}=F_{ll}\left(a_0(t)+a_1(t)x^k+a_2(t)(x^k)^2\right)
\end{equation*}

\end{footnotesize}}

for $k=1,2,3,4$ and $k\neq l$ and
{\begin{footnotesize}
$$F_{k,\sigma(i_j)}=
\begin{cases}
\quad F_{lj}=\frac{1}{(x^l-x^j)^2},\quad \text{when} \quad \sigma(i_j)=j\neq l,\\
 \quad F_{ll}=\left(\sum_{k<l}\frac{1}{(x^k-x^l)^2}-\sum_{k>l}\frac{1}{(x^l-x^k)^2}\right),\quad \text{when}\quad \sigma(i_j)=l.
\end{cases}\label{casesf}
$$
\end{footnotesize}}

By direct comparison between \eqref{CoupledRic} and \eqref{defjacobian}, we see there is difference of one factor,
that implies that the canonical volume form $\Omega=dx^1\wedge dx^2\wedge dx^3\wedge dx^4$ for a four-dimensional system of first-order
Riccati differential equations on a VNJ manifold is has to be conformally transformed to
{\begin{footnotesize}
$$\Omega=
\begin{cases}
F_{lj}dx^1\wedge dx^2\wedge dx^3\wedge dx^4 \quad \text{when}\quad \sigma(i_j)=j\neq l\\
F_{ll}dx^1\wedge dx^2\wedge dx^3\wedge dx^4 \quad \text{when}\quad \sigma(i_j)=l.
 \end{cases}
$$
\end{footnotesize}}

%  \bar{\Omega}=&(-1)^{4+i}\sum_{\sigma}(-1)^{\sigma_{({i_1},{i_2},{i_3})}} \times \nonumber\\
%  &\prod_{\sigma(i_j),k=1,2,3} \left(\sum_{k<l}\frac{1}{(x^k-x^l)^2}-\sum_{k>l}\frac{1}{(x^l-x^k)^2}\right)dx^1\wedge dx^2\wedge dx^3\wedge dx^4
% 

For this particular case, the canonical bracket $\{x^1,x^2,x^3,x^4\}=1$ turns out in
{\begin{footnotesize}
\begin{equation*}
 \{x^1,x^2,x^3,x^4\}=\frac{1}{F_{lj}},\quad  \{x^1,x^2,x^3,x^4\}=\frac{1}{F_{ll}}
\end{equation*}
\end{footnotesize}}
if $\sigma(i_j)=j\neq l$ and $\sigma(i_j)=l$, correspondingly. 
In this way, the canonical structure in \eqref{bivec} for $(\Lambda,\Box)$ when $n=4$ has to be conformally transformed into $(\bar{\Lambda},\bar{\Box})$ 
that takes the expression
{\begin{footnotesize}
$$(\bar{\Lambda},\bar{\Box})=
\begin{cases}
 (F_{lj}\Lambda,F_{lj}\Box), \quad \text{for} \quad \sigma(i_j)=j\neq l \\
 (F_{ll}\Lambda,F_{ll}\Box),\quad \text{for}\quad \sigma(i_j)=l.
\end{cases}
$$
\end{footnotesize}}
This choice leaves the one-form $\theta=dx^4$ invariant.
Hence, the VNJ structure for a four-dimensional system of first-order differential Riccati equations is $(\mathcal{O},\bar{\Omega},\theta)$, as defined above.

Let us now apply the HJ theory on this VNJ manifold. Recall that the definition of a Hamiltonian vector field on a NJ manifold is:
\begin{equation}
 X_{H_1,\dots,H_{n-1}}=X_{H_1,\dots,H_{n-1}}^{\Lambda}+\sum_{i=1}^{n-1}(-1)^{i-1} H_iX_{H_1,\dots,H_{n-1}}^{\Box}
\end{equation}
Applied to our particular case of a four-dimensional system of first-order Riccati differential equations, this expression is simplified to:
{\begin{footnotesize}
\begin{align*}
 X_{H_1 H_2 H_3}=\sharp_{\Lambda} (dH_1\wedge &dH_2\wedge dH_3)+H_1\Box(dH_2\wedge dH_3)\\
 &-H_2\Box(dH_1\wedge dH_3)+H_3\Box (dH_1\wedge dH_2)
\end{align*}
\end{footnotesize}}
This can be rewritten in following compact form:
{\begin{footnotesize}
\begin{align}\label{vf}
 X_{H_1 H_2 H_3}=\sum_{l=1}^3 &\left((-1)^{i+1}H_i\Lambda_{12}\right)_l (H_j\wedge H_k)_l \frac{\partial}{\partial x^3}+\sum_{l=1}^3 \left((-1)^{i+1}H_i\Lambda_{31}\right)_l (H_j\wedge H_k)_l \frac{\partial}{\partial x^2}\nonumber\\
 +&\sum_{l=1}^3 \left((-1)^{i+1}H_i\Lambda_{23}\right)_l (H_j\wedge H_k)_l \frac{\partial}{\partial x^1}+\left(\frac{\partial H_1}{\partial x^1}\frac{\partial H_2}{\partial x^2}\frac{\partial H_3}{\partial x^3}\right)\frac{\partial}{\partial x^4}
\end{align}
\end{footnotesize}}
\noindent
where $\Lambda_{mn}=\frac{\partial}{\partial x^m}\otimes \frac{\partial}{\partial x^n}$ is an operator acting 
on the wedge product of two functions $H_j\wedge H_k=H_j\otimes H_k-H_k\otimes H_j$, each element of the scalar product in each
entry of the operator. Notice that $j<k$, $j=1,2,3$ and $i\neq j\neq k$ for each summand.

We define a projected vector field $X_{H_1 H_2 H_3}^{\gamma}$ as:

\begin{equation}
 X_{H_1 H_2 H_3}^{\gamma}=T{\pi}\circ X_{H_1 H_2 H_3}\circ \gamma
\end{equation}

such that the diagram below is commutative:

\[
\xymatrix{ \mathcal{O}
\ar[dd]^{\pi} \ar[rrr]^{X_{H_1 H_2 H_3}}&   & &T\mathcal{O}\ar[dd]^{T\pi}\\
  &  & &\\
 \mathcal{O}|_{3}\ar@/^2pc/[uu]^{\gamma}\ar[rrr]^{X_{H_1 H_2 H_3}^{\gamma}}&  & & T\mathcal{O}|_{3}}
\]
\noindent
Let us choose a section $\gamma$ that in coordinates takes the expression $\gamma=(x^1,x^2,x^3,\gamma^{4}(x^1,x^2,x^3))$
and we denote by $\mathcal{O}_3$ the restriction of the four-dimensional space
$\{(x_1,x_2,x_3,x_4)\,|\, x_i\neq x_j,i\neq j=1,\ldots,4\}\subset \mathbb{R}^4$ to 
$\{(x_1,x_2,x_3)\,|\, x_i\neq x_j,i\neq j=1,\ldots,3\}\subset \mathbb{R}^4$. So, the projected vector field
in coordinates reads

{\begin{footnotesize}
\begin{align}
 X_{H_1 H_2 H_3}=&\sum_{l=1}^3 \left((-1)^{i+1}H_i\Lambda_{12}\right)_l (H_j\wedge H_k)_l \frac{\partial}{\partial x^3}+\sum_{l=1}^3 \left((-1)^{i+1}H_i\Lambda_{31}\right)_l (H_j\wedge H_k)_l \frac{\partial}{\partial x^2}\nonumber\\
 +&\sum_{l=1}^3 \left((-1)^{i+1}H_i\Lambda_{23}\right)_l (H_j\wedge H_k)_l \frac{\partial}{\partial x^1}
\end{align}
\end{footnotesize}}

The image of this projected vector field by $T\gamma$ is:

{\begin{footnotesize}
\begin{align}
 T{\gamma} X_{H_1 H_2 H_3}&=\sum_{l=1}^3 \left((-1)^{i+1}H_i\Lambda_{12}\right)_l (H_j\wedge H_k)_l \left(\frac{\partial}{\partial x^3}+\frac{\partial \gamma^4}{\partial x^3}\frac{\partial}{\partial x^4}\right)\nonumber\\
 &+\sum_{l=1}^3 \left((-1)^{i+1}H_i\Lambda_{31}\right)_l (H_j\wedge H_k)_l \left(\frac{\partial}{\partial x^2}+\frac{\partial \gamma^4}{\partial x^2}\frac{\partial}{\partial x^4}\right)\nonumber\\
 &+\sum_{l=1}^3 \left((-1)^{i+1}H_i\Lambda_{23}\right)_l (H_j\wedge H_k)_l \left(\frac{\partial}{\partial x^1}+\frac{\partial \gamma^4}{\partial x^1}\frac{\partial}{\partial x^4}\right)
\end{align}
\end{footnotesize}}
\noindent
that compared with \eqref{vf} provides the Hamilton--Jacobi equation for a four-dimensional system of first-order Riccati differential equations
{\begin{footnotesize}
\begin{align}
 &\sum_{l=1}^3 \left((-1)^{i+1}H_i\Lambda_{12}\right)_l (H_j\wedge H_k)_l \frac{\partial \gamma^4}{\partial x^3}+\sum_{l=1}^3 \left((-1)^{i+1}H_i\Lambda_{31}\right)_l (H_j\wedge H_k)_l \frac{\partial \gamma^4}{\partial x^2}\nonumber\\
 &+\sum_{l=1}^3 \left((-1)^{i+1}H_i\Lambda_{23}\right)_l (H_j\wedge H_k)_l \frac{\partial \gamma^4}{\partial x^1}=\frac{\partial H_1}{\partial x^1}\frac{\partial H_2}{\partial x^2}\frac{\partial H_3}{\partial x^3}.
\end{align}
\end{footnotesize}}
This equation is a quasi-linear equation that can be solved with the method of characteristics. Due to the volume of calculations,
we just leave the equation indicated.

\section*{Acknowledgements}
This work has been partially supported by MINECO MTM 2013-42-870-P and
the ICMAT Severo Ochoa project SEV-2011-0087. We kindly acknowledge the committee for choosing our contribution to the Proceeding's book
of the XXV International Workshop on Geometry and Physics (CSIC-Madrid, Spain). We thank Partha Guha for the suggestion of a Hamilton--Jacobi
theory for Nambu--Jacobi manifolds.

\end{document}